\author{Ricardo Almeida}
\affil{Keysight Laboratories, Keysight Technologies}
\title{A Report on Achieving Complete Regular-Expression Matching \\ using Mealy Machines}
\date{May 27, 2021}
\DeclareFontFamily{U}{mathx}{\hyphenchar\font45}
\DeclareFontShape{U}{mathx}{m}{n}{ <-> mathx10 }{}
\DeclareSymbolFont{mathx}{U}{mathx}{m}{n}
\DeclareMathAccent{\widebar}{\mathalpha}{mathx}{"73}
\newcommand{\trans}[2]{\stackrel[{\scriptscriptstyle\langle #2 \rangle}]{#1}{\longrightarrow}}
\newcommand{\transNO}[1]{\stackrel{#1}{\longrightarrow}}
\newcommand{\transNOstar}[1]{\stackrel{#1}{\longrightarrow}_{\varepsilon^*}}
\newcommand{\M}{\mathcal{M}}
\newcommand{\A}{\mathcal{A}}
\newcommand{\Parts}{\mathcal{P}}
\newcommand{\Lang}{\mathcal{L}}
\newcommand{\B}{\mathcal{B}}
\newcommand{\Bo}{\mathcal{B}_{\Gamma}}
\newcommand{\Out}{\mathcal{O}}
\newcommand{\minimize}{\texttt{minComp}}
\newcommand{\AlphbPre}{A}
\newcommand{\Alphb}{U}
\newcommand{\alp}{u}
\newcommand{\letter}{\tau}
\newcommand*{\inlineequation}[2][]{%
  \begingroup
    \refstepcounter{equation}%
    \ifx\\#1\\%
    \else
      \label{#1}%
    \fi
    \relpenalty=10000 %
    \binoppenalty=10000 %
    \ensuremath{%
      #2%
    }%
    ~\@eqnnum
  \endgroup
}
\newtheorem{proposition}{Proposition}[section]
\newtheorem{definition}{Definition}[section]
\begin{document}

\maketitle

\begin{abstract}
While regexp matching is a powerful mechanism for finding patterns in data streams,
regexp engines in general only find matches that do not overlap.
Moreover, different forms of nondeterministic exploration, 
where symbols read are processed more than once, are often used,
which can be costly in real-time matching.
We present an algorithm that constructs from any regexp a Mealy machine that finds \textbf{all} matches and while reading each input symbol only once.
The machine computed can also detect and distinguish different patterns or sub-patterns inside patterns.
Additionally, we show how to compute a minimal Mealy machine via a variation of DFA minimization,
by formalizing Mealy machines in terms of regular languages.
\end{abstract}

\section{Introduction}		\label{sec:introduction}

Regular-expression (regexp) matching is a powerful mechanism with a wide variety of applications,
such as lexical analysis by a compiler~\cite{Johnson1968},
pattern matching in text files~\cite{Thompson1968},
network intrusion detection~\cite{ceska2019,Bispo2007},
pattern detection in time-series data~\cite{Rodrigues2019},
 etc.
There are many powerful regexp engines available and today most programming languages also support some form of regexp matching.
However, engines in general do not find all matches (also known as instances) of the regexp in a given string, 
even when reporting to do so.
More precisely, 
engines commonly only find instances that do not overlap,
which can happen (1) when the last symbols of an instance in the sequence are at the same time the first symbols of the next instance found of the same pattern, or 
(2) when the regexp encodes more than one pattern to be detected and some instances of one pattern overlap with (or are even entirely contained in) instances of another pattern.

There are several real-world applications where supporting (1) and (2) is important, 
with (1) being necessary in some cases.
For a simple example of (1), 
consider the digital-triggering problem of detecting pulse waves in an input signal.
A pulse is a rectangular waveform in which the amplitude alternates between fixed low and high values,
where the period from the low-to-high transition to the high-to-low transition can be counted as one pulse.
If we encode each time point in the signal's waveform as a symbol representing its amplitude ($l$ for low and $h$ for high),
then detecting pulses becomes a problem of looking for matches of the expression $lh^+l$ in the signal's encoding
(i.e., a low sample followed by 1 or more high samples and then a low sample).
When analysing waveforms like the square wave ($lhlhlhlh\ldots$),
regexp engines in general only find half of the instances,
and support for (1) is needed to find the other half.

\begin{table}[hbtp]
\footnotesize
\begin{center}
\begin{tabular}{l|l}
 \textbf{Pattern regexp}  & \textbf{Outputting condition} \\
\hline
 \multirow{2}{10em}{$e_1 = a(b+c)^+d\langle\alpha\rangle$} & Ouput $\alpha$ after any string with one $a$ (at the start), \\
& one $d$ (at the end) and one or more $b$'s or $c$'s in between. \\
\hline
 \multirow{3}{11.1em}{$e_2 = d((a^*b^++b^*)c)^+d \langle\beta\rangle$} & Ouput $\beta$ after any string with two $d$'s (one at the start  \\
 & and one at the end) where every sequence of $a$'s is followed \\ 
 &  by a $b$ and every sequence of $b$'s is followed by a $c$. \\
 \hline
 $e_3 = e_1 + e_2$  & Produce any outputs according to $e_1$ or $e_2$.
\end{tabular}
\end{center}
\caption{Examples of patterns that can be detected by FSMs.}
\label{table:three_expressions}
\end{table}

For a more elaborate example of (1), and also of (2),
consider the patterns defined by the regexps $e_1$ and $e_2$ from Table~\ref{table:three_expressions}.
We distinguish between occurrences of the two patterns by using different output symbols ($\alpha$ and $\beta$) that are to be emitted when matches of one or the other are detected.
These output symbols are specified in $e_1$ and $e_2$ after the corresponding input symbols and surrounded by '$\langle$' and '$\rangle$'.
Now consider the example matching trace 
\setuldepth{\large{$\langle\alpha\beta\rangle$}} 
$\! s = \overline{ab\mbox{\ul{$d$}}}\underline{\langle\alpha\rangle bc\underline{abcbc\overline{d\raisebox{2.89mm}{}}}}\overline{\langle\alpha,\beta\rangle cd} \langle\beta\rangle$,
where both $e_1$ and $e_2$ occur twice
(each instance is individually underlined/overlined for ease of readability).
Consider now that matches must be identified as early as possible and efficiency overall is critical,
such as in real-time triggering, where the option to cache input for later processing is limited.
For this reason,
we need a model that detects all matches of $e_1$ and $e_2$ (which we call \emph{complete matching})
on any input and while processing each input symbol only once,
thus enabling us to process input data at the same rate it is read.

Traditionally, regexp engines perform the matching by first converting the regexp to an equivalent finite-state machine (FSM, such as a DFA that reads an input symbol at a time and either accepts or rejects the input read so far if it matches \emph{exactly} the given expression) and then running the FSM against the desired input stream
(in some cases the FSM computation is skipped and the matching is done against the regexp directly, but this distinction is not relevant here).
In this report we consider Mealy machines,
which are commonly used in the field of natural-language processing~\cite{Kaplan1994,Mohri2005}
and differ from DFA in that they react to the input read by producing (or not producing) output symbols.
With more than one output symbol, such as $\alpha$ and $\beta$,
we can discriminate amongst different patterns detected.
Next we will present two possible attempts at solving the present problem that could be carried out with standard regexp engines, as we know them, and the limitations they have.

As a first attempt at complete matching of $e_1$ and $e_2$, 
we could convert each regexp to its corresponding FSM and then run both of them, in parallel, 
against the same input.
However, there would be problems in both the correctness and the efficiency of this solution:
\begin{itemize}

 \item As already mentioned, 
 regexp engines in general will not detect matches that overlap.
 In particular, 
 the second match of $e_2$ in $s$ would be overlooked since it overlaps with the first one.
 
 \item We could achieve complete matching of $e_2$ based on its FSM by forcing the engine to repeat the search from every index in the input string, 
 including backtracking to an earlier index when a match ending in a later one has been found. 
 For example, in trace $s$, the engine would have to attempt a match 13 times (one for every symbol),
 including backtracking to the 4th index after the first match of $e_2$ (which lasts from the 3rd index to the 11th) has been processed.
 This would come with an efficiency cost, 
 as backtracking is often considered to be the major cause of inefficiency in regexp tools.
 
\item Backtracking restricted to the outside of matches would suffice when finding only non-overlapping matches  
(and this probably approximates the behaviour of most regexp engines), 
but in the case of real-time triggering,
 where each input symbol is to be read, ideally, only once,
this is still undesirable, if at all affordable.

\item Another aspect in which the efficiency of this attempt falls short is the fact that, in general,
computing FSMs from multiple regexps individually often produces more states (and never fewer) altogether 
than computing a single FSM where each regexp is a sub-pattern of a larger regexp.
If we write $e_1$ and $e_2$ as two sub-patterns of regexp $e_3$ (see Table~\ref{table:three_expressions}), 
we can then compute a single FSM $\M$ containing 8 states (see Figure~\ref{fig:dfst_e3_min})
as opposed to the 10 states in total from the two FSMs.
 
\end{itemize}

\begin{figure}[hbtp]
\centering
  \includegraphics[width=.5\linewidth]{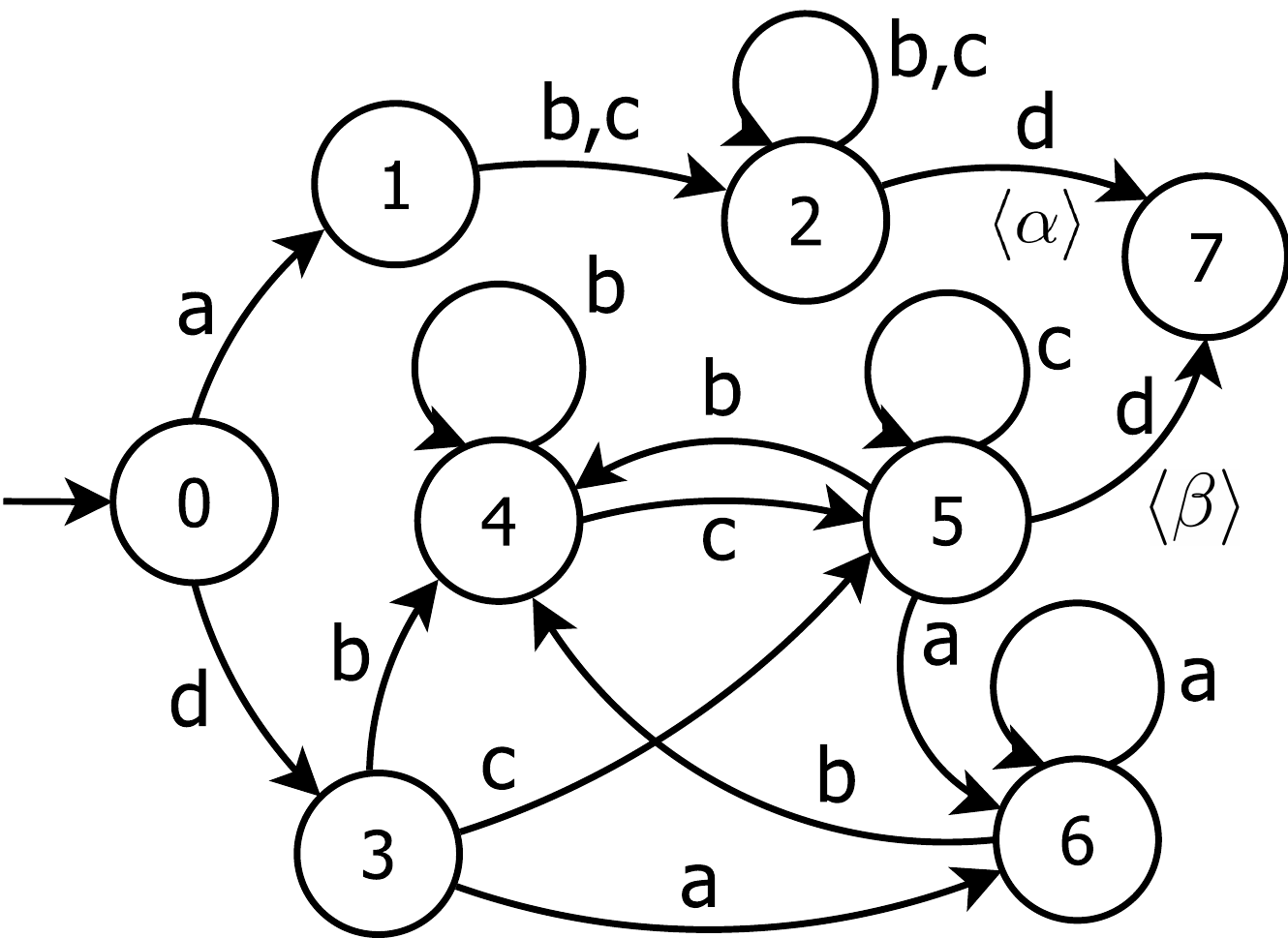}
\caption{A Mealy machine $\M$ detecting (exact) instances of $e_3$.}
\label{fig:dfst_e3_min}
\end{figure}

To accommodate the last point made,
we could re-attempt complete matching
but this time using just $\M$.
This, however, would create new problems:
\begin{itemize}

 \item Note that now even detecting non-overlapping instances of the sub-patterns given is not guaranteed to work:
in trace $s$, both instances of $e_2$ overlap with a previous instance of $e_1$ (and therefore of $e_3$),
making both of them go undetected by a common regexp engine.

 \item As above, this could be resolved by repeating the exploration from every index in the input,
 but this time the efficiency cost would be even greater,
 since each exploration would now be on a larger FSM. 
 
\end{itemize}

The unified machine $\M$ brings us closer to the solution we will present,
but as we have just seen it still does not allow us to find \emph{all} matches in arbitrary sequences while reading each symbol only once.
An FSM capable of doing so would need to, at least, have transitions for all symbols from every state
(i.e., be \emph{complete}).
However, as we will see,
there is no way of making $\M$ complete that could possibly solve the problem.

Finally, 
we are interested in working with FSMs in their smallest representations possible
(for instance, when constructed from regexps, FSMs in general are not minimal).
We will show a variation of Hopcroft's minimization algorithm for DFA that reduces Mealy machines to the fewest states possible.
We also consider a different minimization algorithm proposed for Mealy machines~\cite{Solovev2011}, 
which we show is not complete.

\paragraph{Report outline.} We start by visiting in Section~\ref{sec:preliminaries} the foundational concepts of regular expressions and languages, automata and transducers.
In Section~\ref{sec:patternRegexps} we define a grammar for writing regexps representing patterns with input and output symbols, 
as well as their equivalent representations in the form of non-deterministic transducers.
Section~\ref{sec:determinization} shows how to determinize transducers into Mealy machines,
and we show in Section~\ref{sec:minimization} how any Mealy machine can be minimized to its (unique) minimal equivalent machine by an indirect application of a DFA minimization algorithm. 
In Section~\ref{sec:completeMatching} we construct our theoretical model for complete regexp matching while processing input symbols only once.
Appendix~\ref{sec:app:stateMerges} contains a counter-example showing that the minimization algorithm from \cite{Solovev2011} does not apply all reductions possible.
Appendix~\ref{sec:app:automata_images} provides additional figures referenced in the report.

\section{Preliminaries}		\label{sec:preliminaries}

We briefly present some fundamental notions of regular expressions and languages,
automata and transducers.

Let the finite set $\AlphbPre$ be an \emph{alphabet}.
Any finite sequence of letters is called a \emph{word} over $\AlphbPre$ and the \emph{empty word} is denoted by $\varepsilon$.
Let $\AlphbPre^*$ be the set of all words over $\AlphbPre$.
A \emph{language} over $\AlphbPre$ is a subset of $\AlphbPre^*$.
The set of regular expressions (regexps) over $\AlphbPre$, $E_\AlphbPre$, is defined by 
$$ e \in E_\AlphbPre \; := \; \varepsilon \mid \letter \in \AlphbPre \mid e' \cdot e'' \mid e' + e'' \mid e'^*, \; \mbox{where $e',e'' \in E_\AlphbPre$.} $$  

The operator $\cdot$ (concatenation) is often omitted.
For every $e$, its language can be defined inductively as 
$\Lang(\varepsilon) = \{ \varepsilon \}$,
$\Lang(\letter) = \{\letter\}$ for $\letter \in \AlphbPre$,
$\Lang(e' + e'') = \Lang(e') \cup \Lang(e'')$,
$\Lang(e' \cdot e'') = \Lang(e') \cdot \Lang(e'')$ and
$\Lang(e'^*) = \Lang(e')^*$.

Regular languages can also be represented by automata.
A \emph{Nondeterministic Finite-Word Automaton} (NFA) is defined by a 5-tuple
$(Q,\Sigma,\delta,I,F)$, where $Q$ is a finite set of states,
$\Sigma$ an input alphabet, 
$\delta \subseteq Q \times (\Sigma \cup \{\varepsilon\}) \times Q$ a \emph{transition relation},
$I \subseteq Q$ a set of \emph{initial states} and $F \subseteq Q$ a set of \emph{final states}.
A word $w \in \Sigma^*$ is in the language of state $p$,
$\Lang(p)$,
 if and only if it can be read from $p$ and ending in a final state by taking transitions that follow the symbols in $w$.
Transitions labeled with $\varepsilon$ do not read any symbol and can always be taken.
The union of the languages of all initial states gives the language of the NFA.
For every expression $e \in E_\AlphbPre$, 
an NFA accepting the same language can be computed by applying Thompson's construction~\cite{Louden1997}.

A \emph{Deterministic Finite-Word Automaton} (DFA) is given by $(Q$,$\Sigma$,$\Delta$,$i$,$F)$,
where $Q$, $\Sigma$ and $F$ are defined as above, 
$\Delta: Q \times \Sigma \rightarrow Q$ is a (partial) transition function and $i$ is the initial state.
We treat DFA as \emph{complete} only if $\Delta$ is total.
Every NFA $(Q,\Sigma,\delta,I,F)$ can be converted to a DFA $(Q',\Sigma,\Delta,i,F')$ accepting the same language by applying the \texttt{subset} construction~\cite{Aho2006}.
The algorithm first sets $i$ to $\varepsilon(I)$
and then constructs $\Delta(i,\sigma) = \varepsilon(move(i,\sigma))$,
for all $\sigma \in \Sigma$,
where $move(S,\sigma)$ is the set of states that can be reached by $\sigma$ from states in $S$.
The same process is iterated over all new subsets of $Q$ until all subsets constructed have been processed.
At the end,
all subsets constructed are stored in $Q' \subseteq P(Q)$ and $F'$ is set to 
$\{ S \in Q' \mid F \cap S \neq \emptyset \}$.

As a consequence of the Myhill–Nerode theorem,
for every DFA there is a \emph{minimal} 
(with the fewest possible states) equivalent DFA and it is unique~\cite{Hopcroft1979}.
Computing the minimal DFA is known as \emph{minimization} and
different algorithms exist,
like Hopcroft's algorithm~\cite{Hopcroft1971,berstel2010},
which has the best worst-case complexity known
($O(kn \log n)$, for $n$ the number of states and $k$ the size of the alphabet),
and Moore's algorithm ($O(kn^2)$)\cite{church_1958}.
Both algorithms are based on partition refinement:
partitioning the states into equivalence classes based on their behaviour (the words they can read)
and merging all states in a class into a macrostate in the minimal automaton.

A \emph{Finite-State Transducer} (FST) is a nondeterministic machine with two memory tapes, 
an input tape and an output tape~\cite{berstel, elgot}. 
It can be seen as an NFA that reads and writes symbols.
Formally, an FST is a 6-tuple $(Q,\Sigma,\Gamma,I,F,\delta)$,
where $Q$, $\Sigma$, $I$ and $F$ are defined as in NFA,
$\Gamma$ is an \emph{output alphabet} and
$\delta \subseteq Q \times (\Sigma \cup \{\varepsilon\}) \times (\Gamma \cup \{\varepsilon\}) \times Q$ a transition relation.
A transition $(p,\sigma,\gamma,q) \in \delta$ 
reads \emph{from $p$ we may read $\sigma$ and move to $q$, 
in which case $\gamma$ is emitted},
and we call $p$ and $q$ the \emph{departing} and \emph{arriving} states, respectively.
Transitions may be by the empty word ($\sigma = \varepsilon$) or have no output ($\gamma = \varepsilon$).
In NFA and FST, 
the \emph{$\varepsilon$-closure} of $p$ (here denoted by $\varepsilon(p)$) is
the set of states reachable from $p$ by $\varepsilon$-transitions,
and we always have $p \transNO{\varepsilon} p$.
The closure operation can be lifted to sets of states.

A \emph{Mealy machine}~\cite{mealy,mohri}, also known as a deterministic FST, 
is a 6-tuple $(Q,\Sigma,\Lambda,i,T,G)$, 
where $Q$, $\Sigma$ and $i$ are defined as for DFA,
$\Lambda$ is an output alphabet, and
$T: Q \times \Sigma \longrightarrow Q$ and 
$G: Q \times \Sigma \longrightarrow \Lambda \cup \{\varepsilon\}$ are
transition functions that map a departing state and an input symbol to an arriving state and either the output symbol emitted or the empty word, respectively. 
We treat $T$ and $G$ as partial functions (with the same domain) and only if they are total do we say the Mealy machine is complete.
In FST and Mealy machines we sometimes write transitions as $p \trans{\sigma}{\gamma} q$.
When the output (if it exists) need not be specified, we can just write
$p \transNO{\sigma} q$, 
and also $p \stackrel{A}{\longrightarrow} q$ when $p \transNO{\sigma} q$ for every symbol $\sigma \in A \subseteq \Sigma$.
When $q$ is reachable by taking transitions that follow the symbols in $w \in \Sigma^*$ we write $p \transNO{w} q$ and
$p \transNOstar{w} q$ if before and after the symbol transitions there may be 1 or more $\varepsilon$-transitions.

\section{From Pattern Regexps to Complete Matching}

In this section we show how to convert any regexp $e$ with input and output symbols into the smallest Mealy machine
performing complete matching of $e$
on any input sequence of arbitrary length.
As we will see, the regexp may encode any number of patterns (and sub-patterns inside a pattern) and the computed machine is able to discriminate amongst them upon a match.

\subsection{Pattern Regexps and Finite-State Transducers}	\label{sec:patternRegexps}

Let $\Sigma$ and $\Gamma$ be respectively input and output alphabets.
We define a \emph{pattern regexp} as a regexp in $E_\Alphb$,
where $\Alphb = \Sigma \times (\Gamma \cup \{ \varepsilon \})$
\emph{unifies} input and output symbols to naturally capture the notion that outputs are produced as reactions to the input read,
e.g., $(a,\alpha)$ (\emph{output $\alpha$ on reading $a$}),
but not all symbols read cause output symbols to be emitted, e.g., $(a,\varepsilon)$.
We represent pairs $(\sigma,\gamma)$ as $\sigma \langle \gamma \rangle$ and, for simplicity,
pairs with no output $(\sigma,\varepsilon)$ as simply $\sigma$.
We can see all expressions in Table~\ref{table:three_expressions} (Section~\ref{sec:introduction}) are pattern regexps.

While regexps are nondeterministic models,
the notion that output is produced as a reaction to the input read requires treating the output symbols deterministically.
This can be captured by a notion of \emph{behaviour} of a pattern regexp. 
While language defines the set of words that match a given regexp,
we define the behaviour of a pattern regexp as the function that maps input words after which an output is produced to the symbols they emit. 
This is formally defined based on the corresponding language and using the auxiliary notation $w|_{\Sigma}$ that
denotes the sequence of input symbols in a string $w \in \Alphb^*$,
such that $\varepsilon|_{\Sigma} = \varepsilon$ and $((\sigma,o)\cdot w')|_{\Sigma} = \sigma \cdot w'|_{\Sigma}$.

\begin{definition}	\label{def:behaviourExpression}
The \emph{behaviour of a regexp} $e \in E_\Alphb$ is the partial function
$\B(e) : \Sigma^* \rightarrow \Parts(\Gamma)$ where
$\B(e)(\varepsilon) = \{ \gamma \in \Gamma \mid (\varepsilon,\gamma) \in \Lang(e) \}$ and
$\forall_{w \in \Sigma^*} \forall_{\sigma \in \Sigma} \cdot \B(e)(w\sigma) = \{ \gamma \in \Gamma \mid \exists_{\alp \in \Alphb^*} \cdot 
\alp|_\Sigma = w \land \alp \!\cdot\! (\sigma,\gamma) \in \Lang(e) \}$.
\end{definition}

It is easy to see that expressions in $E_\Alphb$ may have different languages but the same behaviour
(e.g.\ $a\langle \alpha\rangle$ and $a+a\langle \alpha\rangle$)
but not vice-versa.
From a pattern regexp we can then compute a corresponding FST, 
by first computing an NFA via Thompson's construction and then
modifying all transitions of the form $\sigma \langle \gamma \rangle$ to $\trans{\sigma}{\gamma}$. 
The FST computed via this construction for expression $e_3$ from Section~\ref{sec:introduction} can be seen in Appendix~\ref{sec:app:automata_images} (Figure~\ref{app:fig:fst_e3}).

While in NFA and DFA a word $w$ is in the language of a state $p$ if a final state can be reached by reading $w$ from $p$,
in a transducer the \emph{behaviour} of $p$ maps $w$ to a set of symbols if they can be emitted after reading $w$ from $p$.
Definition~\ref{def:behaviourTransducerState} captures the notion that the symbols may be emitted by a transition reading the last symbol in $w$ or by any $\varepsilon$-transition that follows.

\begin{definition}	\label{def:behaviourTransducerState}
The \emph{behaviour of a state $p$} in a transducer is the partial function
$\B(p) : \Sigma^* \rightarrow \Parts(\Gamma)$, where 
$\B(p)(\varepsilon) = \{ \gamma \in \Gamma \mid \exists_{q,r \in \varepsilon(p)} \cdot q \trans{\varepsilon}{\gamma} r \} $
and $\forall_{w \in \Sigma^*} \forall_{\sigma \in \Sigma} \cdot \B(p)(w\sigma) = 
\{ \gamma \in \Gamma \mid \exists_{\gamma' \in \Gamma \cup \{\varepsilon\}} \exists_{q,r \in Q} \cdot p \transNOstar{w} q \trans{\sigma}{\gamma'} r \land ( \gamma = \gamma' \lor \gamma \in \B(r)(\varepsilon) ) \}$.
\end{definition}

Since the output produced defines how a transducer (or regexp) reacts or does not react to a given input, 
we make no distinction between transducers (or regexps) that differ only for inputs for which no output is produced.
Thus, we speak of the \emph{outputting behaviour} (or simply \emph{output}) of a state (or regexp) as the partial function
$\Bo(p) : \Sigma^* \rightarrow \Parts(\Gamma)$ such that $\Bo(p)(w)$ is defined and is equal to $\B(p)(w)$ if and only if $\B(p)(w) \neq \emptyset$.
Pattern regexps may then have different behaviours but the same output 
(e.g.\ $a\langle \alpha\rangle$ and $a\langle \alpha\rangle a^*$).

The \emph{output of a transducer} $\M$ can then be defined by that of its initial states:
$\forall_{w \in \Sigma^*} \cdot \Bo(\M)(w) = \{ \gamma \mid \exists_{i \in I} \cdot \gamma \in \Bo(i)(w) \}$.
In a Mealy machine, the mapping is into singleton sets of symbols in $\Lambda$.
We say two transducers $\M$ and $\M'$ are \emph{equivalent} if and only if
$\Bo(\M) = \Bo(\M')$.

\subsection{From Finite-State Transducers to Mealy Machines}	\label{sec:determinization}

In this section we consider the problem of determinizing transducers.
Note that FSTs that emit output symbols before any input is read do not have an equivalent deterministic transducer.
For all other FSTs, we obtain an equivalent Mealy machine by adapting the subset construction for NFA.

Given an FST $(Q,\Sigma,\Gamma,I,F,\delta)$, 
let us denote by $\texttt{subsetT}$ the algorithm that constructs the Mealy machine $(Q',\Sigma,\Lambda,i,T,G)$,
where $Q' \subseteq P(Q)$, $\Lambda \subseteq \Parts(\Gamma)$ and $i = \varepsilon(I)$.
The transition function $T$ maps a state $S \subseteq Q$ and an input symbol $\sigma \in \Sigma$ to 
$T(S,\sigma) = \varepsilon(\{ q \mid \exists_{p \in S} \cdot p \transNO{\sigma} q \})$,
the set of all states that can be reached from states in $S$ by a transition by $\sigma$ followed by 0 or more transitions by $\varepsilon$.
The transition function $G$ maps a state $S$ and a symbol $\sigma$ to the set $G(S,\sigma) = G_\sigma(S) \cup G_\varepsilon(T(S,\sigma))$,
where $G_\sigma(S) = \{ \gamma \in \Gamma \mid \exists_{q \in S, q' \in Q} \cdot q \trans{\sigma}{\gamma} q' \}$ contains all output symbols emitted by transitions by $\sigma$ from states in $S$,
and $G_\varepsilon(T(S,\sigma)) = \{ \gamma \in \Gamma \mid \exists_{p,q \in T(S,\sigma)} \cdot p \trans{\varepsilon}{\gamma} q\}$
contains all output symbols that are emitted without reading any input after $\sigma$ is read from $S$.
The $\texttt{subsetT}$ construction iterates over all subsets constructed as in the case of an NFA:
first it computes $i$ and all its outgoing transitions using $T$ (and $G$),
and then the same process is repeated for the new subsets of $Q$ until all subsets constructed have been processed.
Finally, $Q'$ contains $i$ and all subsets computed by $T$ during the construction,
and $\Lambda$ contains all non-empty sets of symbols computed by $G$, while the empty set is encoded as $\varepsilon$.
Figure~\ref{fig:dfst_e3} shows the Mealy machine for $e_3$ (Section~\ref{sec:introduction}) computed by \texttt{subsetT} from the FST in Appendix~\ref{sec:app:automata_images} (Figure~\ref{app:fig:fst_e3}).
Proposition~\ref{prop:subsetT} states the correctness of \texttt{subsetT}.

\begin{proposition}		\label{prop:subsetT}
 For any FST $\M = (Q,\Sigma,\Gamma,I,F,\delta)$, 
 $\Bo(\M) = \Bo(\M')$, for $\M' = (Q',\Sigma,\Lambda,i,T,G)$ the Mealy machine such that $\M' =$ \texttt{subsetT\,$(\M)$}.
\end{proposition}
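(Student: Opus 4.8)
The plan is to reduce the claim to one structural invariant about which states of $\M$ are ``alive'' after each input prefix, and then to check that the outputs recorded by $G$ along a run of $\M'$ are exactly those that $\B(\M)$ prescribes. Throughout, write $\varepsilon(\cdot)$ for the $\varepsilon$-closure in $\M$, and for $w \in \Sigma^*$ let $R(w)$ be the set of states of $\M$ reachable from some $i' \in I$ by a sequence of transitions reading exactly the letters of $w$, closed under $\varepsilon$-transitions; so $R(\varepsilon) = \varepsilon(I)$.

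First I would prove, by induction on $|w|$, the \emph{reachability invariant}: the unique state of $\M'$ reached from $i$ on input $w$ --- call it $T^{*}(i,w)$ --- equals $R(w)$. The base case holds since $i = \varepsilon(I) = R(\varepsilon)$. For the step, $T^{*}(i,w\sigma) = T(T^{*}(i,w),\sigma) = T(R(w),\sigma) = \varepsilon(\{ q \mid \exists_{p \in R(w)}\cdot p \transNO{\sigma} q \})$ by the induction hypothesis and the definition of $T$ in \texttt{subsetT}; since $R(w)$ is already $\varepsilon$-closed, this is exactly the set of states reachable by ``read $w$, then one $\sigma$-step, then $\varepsilon$-steps'', i.e.\ $R(w\sigma)$. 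Along the way this also records the decomposition $R(w\sigma) = \bigcup\{\,\varepsilon(r) \mid \exists_{p\in R(w)}\cdot p\transNO{\sigma} r\,\}$, which I use below.

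Next the outputs. For $w=\varepsilon$: a Mealy machine has no $\varepsilon$-transitions, so $\B(\M')(\varepsilon)=\emptyset$ and $\Bo(\M')$ is undefined at $\varepsilon$; this agrees with $\Bo(\M)$ precisely when $\M$ emits nothing before reading any input --- the FSTs for which a deterministic equivalent exists at all, as noted just before the statement. For $w = w'\sigma$ I would unfold $\B(\M)(w'\sigma)$ from Definition~\ref{def:behaviourTransducerState} and the definition of the behaviour of a transducer: it is the union, over $i'\in I$ and over runs $i'\transNOstar{w'} q \trans{\sigma}{\gamma'} r$, of the sets $\{\gamma'\}\cup\B(r)(\varepsilon)$ (a transition's $\gamma'=\varepsilon$ contributing no element). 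Split this into the ``$\gamma'$ part'' and the ``$\B(r)(\varepsilon)$ part''. Since $q$ ranges exactly over $R(w') = T^{*}(i,w')$, the first part is $\{\gamma \mid \exists_{q\in R(w'),\,q''}\cdot q\trans{\sigma}{\gamma}q''\} = G_{\sigma}(T^{*}(i,w'))$. For the second part, each $\varepsilon$-transition counted by $\B(r)(\varepsilon)$ runs between two states of $\varepsilon(r)\subseteq R(w'\sigma)$, and conversely every $\varepsilon$-transition whose endpoints lie in $R(w'\sigma)$ already lies inside a single such $\varepsilon(r)$ (by $\varepsilon$-closedness, using the decomposition above); hence the second part equals $\{\gamma\mid\exists_{p,q\in R(w'\sigma)}\cdot p\trans{\varepsilon}{\gamma}q\} = G_{\varepsilon}(T(T^{*}(i,w'),\sigma))$. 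Therefore $\B(\M)(w'\sigma) = G_{\sigma}(T^{*}(i,w'))\cup G_{\varepsilon}(T(T^{*}(i,w'),\sigma)) = G(T^{*}(i,w'),\sigma)$. Finally, because $\M'$ is deterministic and $\varepsilon$-free, unfolding $\B(\M')(w'\sigma)$ leaves only the single run $i\transNO{w'} T^{*}(i,w') \trans{\sigma}{G(T^{*}(i,w'),\sigma)} T^{*}(i,w\sigma)$ with no trailing $\varepsilon$-transitions, so $\M'$ outputs exactly the symbol $G(T^{*}(i,w'),\sigma) \in \Lambda$. Reading the \texttt{subsetT} convention that $\emptyset$ is encoded as $\varepsilon$, the two sides are defined on the same words and equal there, giving $\Bo(\M) = \Bo(\M')$.

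I expect the delicate point to be the $\varepsilon$-closure bookkeeping in the second part of the output argument: showing that the outputs which $\B(r)(\varepsilon)$ attributes to $\varepsilon$-transitions taken \emph{after} $\sigma$ are matched exactly --- neither dropped nor double-counted --- by $G_{\varepsilon}(T(S,\sigma))$, given that $T$ has already folded all those $\varepsilon$-steps into the target macrostate. A secondary wrinkle is the encoding mismatch between $\B(\M)$, which takes values in $\Parts(\Gamma)$, and $\M'$, whose output alphabet $\Lambda\subseteq\Parts(\Gamma)$ consists of exactly such sets emitted one symbol at a time; this identification must be fixed so that ``$\Bo(\M)=\Bo(\M')$'' reads as a genuine equality of partial functions. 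Everything else --- the reachability invariant and the $\varepsilon$-word base case --- is routine subset-construction reasoning.
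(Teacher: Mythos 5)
Your proof is correct and takes essentially the same route as the paper's: both hinge on the $G_\sigma$/$G_\varepsilon$ decomposition of a macrostate's output on one more symbol and on matching it, with the same $\varepsilon$-closure bookkeeping, against Definition~\ref{def:behaviourTransducerState}. The only differences are organizational --- you prove a reachability invariant ($T^{*}(i,w)=R(w)$) and then compute the output of the last symbol directly, where the paper instead shows that the behaviour of each macrostate $T(S,\sigma)$ equals the union of the behaviours of its constituent FST states and lifts that to words --- and you are somewhat more explicit than the paper about the $\varepsilon$-word base case and the identification needed between $\Lambda\subseteq\Parts(\Gamma)$ and values of $\B(\M)$.
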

\begin{proof}
 We will show $\B(M)=\B(M')$, which implies $\Bo(M)=\Bo(M')$.

 We start by lifting $\B$ to sets of states as follows: 
 $ \forall_{S \subseteq Q} \forall_{w \in \Sigma^*} \cdot \B(S)(w) = \cup_{s \in S} \B(s)(w)$.
 Let us now show that the behaviour of every subset of states computed by \texttt{subsetT} 
 (i.e., every state in the resulting Mealy machine)
 is the behaviour of all FST states in the subset taken together.
 We will first show that the behaviour is preserved for words of one symbol, by showing
 1) $\forall_{S \subseteq Q} \forall_{\sigma,\sigma' \in \Sigma} \cdot \B(T(S,\sigma))(\sigma') = 
 \cup_{s \in S} \cup_{p \in \{ r \mid s \transNO{\sigma} r \}} \B(p)(\sigma')$.
 By the \texttt{subsetT} construction,
 the behaviour of $T(S,\sigma)$ is given by function $G$, such that
 $\B(T(S,\sigma))(\sigma') = G_{\sigma'}(T(S,\sigma)) \cup G_\varepsilon(T(T(S,\sigma),\sigma'))$.  
 By the definition of $G$, we have 2)
 $G_{\sigma'}(T(S,\sigma)) = \{ \gamma \in \Gamma \mid \exists_{q \in T(S,\sigma)} \exists_{r \in Q} \cdot q \trans{\sigma'}{\gamma} r \}$, which, by the definition of $T(S,\sigma)$, can be rewritten as
 $\{ \gamma \in \Gamma \mid \exists_{s \in S} \exists_{p,q,r \in Q} \cdot 
 s \transNO{\sigma} p \transNOstar{\varepsilon} q \trans{\sigma'}{\gamma} r \} = $ 
 $\cup_{s \in S} \cup_{p \in \{ r \mid s \transNO{\sigma} r \}} 
 \{ \gamma \in \Gamma \mid \exists_{q,r \in Q} \cdot p \transNOstar{\varepsilon} q \trans{\sigma'}{\gamma} r \} $. 
 And knowing that $T(T(S,\sigma),\sigma') = T( \varepsilon(\{ q' \mid \exists_{p' \in S} \cdot p' \transNO{\sigma} q' \}) ,\sigma') =$
 $\varepsilon(\{ q'' \mid \exists_{p'' \in \varepsilon(\{q' \mid \exists_{p' \in S} \cdot p' \transNO{\sigma} q' \})} \cdot p'' \transNO{\sigma'} q'' \})$, we can show that 3) $G_\varepsilon(T(T(S,\sigma),\sigma')) = 
 \{ \gamma \in \Gamma \mid \exists_{p,q \in T(T(S,\sigma),\sigma')} \cdot p \trans{\varepsilon}{\gamma} q \} = $
 $\{ \gamma \in \Gamma \mid \exists_{p' \in S} \exists_{p,q,q',p'' \in Q} \cdot 
 p' \transNO{\sigma} q' \transNOstar{\varepsilon} p'' \transNO{\sigma'} q'' \transNOstar{\varepsilon} p \trans{\varepsilon}{\gamma} q \} =$
 $ \cup_{p' \in S} \cup_{q' \in \{ r \mid p' \transNO{\sigma} r \}} 
 \{ \gamma \in \Gamma \mid \exists_{p'',q'' \in Q} \cdot q' \transNOstar{\varepsilon} p'' \transNO{\sigma'} q'' \land \gamma \in \B(q'')(\varepsilon) \}$,
 from which we can obtain, by variable renaming,
 $ \cup_{s \in S} \cup_{p \in \{ r \mid s \transNO{\sigma} r \}} 
 \{ \gamma \in \Gamma \mid \exists_{q,r \in Q} \cdot p \transNOstar{\varepsilon} q \transNO{\sigma'} r \land \gamma \in \B(r)(\varepsilon) \}$.
 We finalize the proof of 1) as follows:
 $\cup_{s \in S} \cup_{p \in \{ r \mid s \transNO{\sigma} r \}} \B(p)(\sigma')$, by Definition~\ref{def:behaviourTransducerState}, 
 is equivalent to
 $\cup_{s \in S} \cup_{p \in \{ r \mid s \transNO{\sigma} r \}} \{ \gamma \in \Gamma \mid \exists_{\gamma' \in \Gamma \cup \{\varepsilon\}} \exists_{q,r \in Q} \cdot p \transNOstar{\varepsilon} q \trans{\sigma'}{\gamma'} r \land (\gamma = \gamma' \lor \gamma \in \B(r)(\varepsilon)) \} =$
 $\cup_{s \in S} \cup_{p \in \{ r \mid s \transNO{\sigma} r \}} (\{ \gamma \in \Gamma \mid \exists_{q,r \in Q} \cdot p \transNOstar{\varepsilon} q \trans{\sigma'}{\gamma} r \} \cup \{ \gamma \in \Gamma \mid \exists_{q,r \in Q} \cdot p \transNOstar{\varepsilon} q \transNO{\sigma'} r \land \gamma \in \B(r)(\varepsilon) \}) =$
 $(\cup_{s \in S} \cup_{p \in \{ r \mid s \transNO{\sigma} r \}} \{ \gamma \in \Gamma \mid \exists_{q,r \in Q} \cdot p \transNOstar{\varepsilon} q \trans{\sigma'}{\gamma} r \} )\cup (\cup_{s \in S} \cup_{p \in \{ r \mid s \transNO{\sigma} r \}} \{ \gamma \in \Gamma \mid \exists_{q,r \in Q} \cdot p \transNOstar{\varepsilon} q \transNO{\sigma'} r \land \gamma \in \B(r)(\varepsilon) \})$, from which, applying 2) and 3), we obtain
 $G_{\sigma'}(T(S,\sigma)) \cup G_\varepsilon(T(T(S,\sigma),\sigma')) = B(T(S,\sigma))(\sigma')$.

 Since, for a given $w \in \Sigma^+$, each symbol is read from a state,
 from 1) it follows 4) $\forall_{S \subseteq Q} \forall_{\sigma \in \Sigma} \forall_{w \in \Sigma^+} \cdot \B(T(S,\sigma))(w) = 
 \cup_{s \in S} \cup_{p \in \{ r \mid s \transNO{\sigma} r \}} \B(p)(w)$.
 A proof along the same lines as the proof above can be built for 
 5) $\forall_{\sigma \in \Sigma} \cdot \B(i)(\sigma) = \cup_{s \in i} \B(s)(\sigma)$,
 and since $\M$ does not produce outputs before any input is read, 
 we have 6) $\B(i)(\varepsilon) = \B(I)(\varepsilon) = \emptyset$.
 From 4), 5) and 6) we obtain $\forall_{w \in \Sigma^*} \cdot \B(i)(w) = \cup_{s \in I} \B(s)(w)$,
 and consequently $\B(M') = \B(M)$.
\end{proof}

\begin{figure}[hbtp]
\centering
  \includegraphics[width=.6\linewidth]{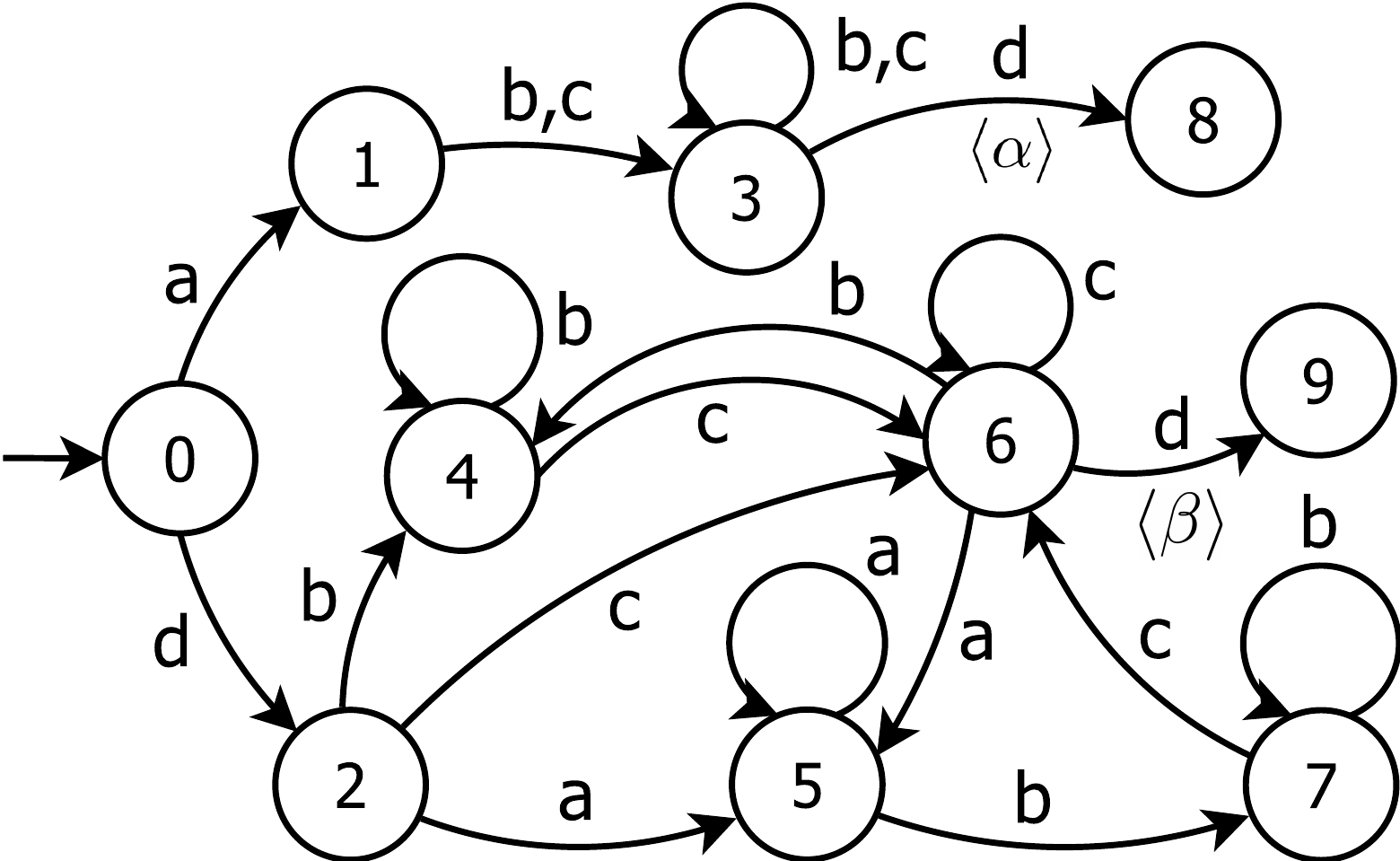}
\caption{A Mealy machine for 
  $e_3 = a(b+c)^+d\langle\alpha\rangle + d((a^*b^++b^*)c)^+d \langle\beta\rangle$.}
\label{fig:dfst_e3}
\end{figure}

As in the case of DFA, 
the Mealy machines computed by the subset construction, in general, are not minimal.
In the next section we address the problem of minimization.

\subsection{Minimizing Mealy Machines}	\label{sec:minimization}

In this section we consider the problem of reducing a Mealy machine to the smallest possible number of states while retaining its behaviour,
also known as the minimization of a Mealy machine.
We will consider different possible approaches and their disadvantages,
and we will present one, based on indirect DFA minimization,
that computes a Mealy machine that is at the same time minimal and complete.
When using Hopcroft's algorithm for DFA minimization, 
the solution here presented becomes the most efficient one to the best of our knowledge,
and we provide a worst-case complexity analysis.

The subject of minimization has been more extensively studied for DFA than for Mealy machines.
A well-known algorithm for minimizing DFA is Moore's algorithm,
originally defined for Moore machines with one output symbol.
Moore machines are state machines where the outputs are determined by the current state,
as opposed to the transitions taken as in Mealy machines.
Moore and Mealy machines are equally expressive and it is not difficult to translate one into the other.
It would thus be possible to convert a Mealy machine to a Moore machine, 
apply an extension of Moore's algorithm that deals with any number of output symbols
and translate it back to a Mealy machine, 
but this would come with two disadvantages.
First, Moore machines are, in general, larger than Mealy machines:
every state in a Mealy machine would result in one or more new states in a Moore machine,
one for each different output (or no output) in the incoming transitions of that state
(for example, the Mealy machine with 9 states in Figure~\ref{fig:rest_Mealy machine_e3_min} would would require 3 new states,
and 12 new transitions, as a Moore machine).
Second,
Moore's algorithm performs worse in the worst case ($O(kn^2)$, for $n$ states and $k$ input symbols~\cite{church_1958}) 
than other DFA minimization algorithms known, 
such as Hopcroft's algorithm, and so better alternatives exist.

A minimization algorithm for Mealy machines has been presented in \cite{Solovev2011},
 applying different state-merging techniques 
(also followed in \cite{Klimovich2012} and more recently in \cite{Klimowicz2020}).
However, we present a counter-example (see Appendix~\ref{sec:app:stateMerges}) showing that, 
contrary to the claim, these techniques are not complete. 

Below, we present a solution for minimizing a Mealy machines to its unique, equivalent machine that is minimal and complete:

\vspace*{4pt}

\noindent
\minimize($\M = Q,\Sigma,\Lambda,i,T,G$):
\begin{enumerate}
 \item Make $\M$ complete by adding an absorbing state $\bot$ to $Q$ and transitions
 $T(p,\sigma) = (\varepsilon, \bot)$ for all $\sigma \in \Sigma$ and all $p \in Q$ with no transition by $\sigma$.\label{minimize:completeSigma}
 \item Let $\A = (Q,\Alphb,\Delta,i,F)$, with $F=Q$, be the DFA translated from $\M$,
 		where $U = \Sigma \times (\Lambda \cup \{\varepsilon\})$ and $\Delta$ joins $T$ with $G$. 	 \label{minimize:considerA}
 \item Make $\A$ complete by adding a non-final state $\bot'$ and transitions
  $\Delta(p,\alp) = \bot'$ for all $\alp \in \Alphb$ and all $p$ with no transition by $\alp$.\label{minimize:completeSigmaLambda}
 \item Compute the minimal DFA $\A' = (Q',\Alphb,\Delta',i',F')$ of $\A$.\label{minimize:minimizeA}
 \item Remove from $\A'$ the subset with $\bot'$ and all its incoming transitions.\label{minimize:removeBot}
 \item Return the machine $\M' = (Q',\Sigma,\Lambda,i',T,G)$ translated from $\A'$.\label{minimize:return}
\end{enumerate}

An implementation can perform all steps in the input machine (translations are only implicit).

\begin{proposition}		\label{prop:correctnessMinimize}
For any Mealy machine $\M$, 
$\minimize(\M)$ computes the unique minimal complete Mealy machine $\M'$ such that $\Bo(\M) = \Bo(\M')$.
\end{proposition}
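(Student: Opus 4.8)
The plan is to reduce the statement to the uniqueness of the minimal DFA, exploiting the correspondence that steps~\ref{minimize:considerA} and~\ref{minimize:return} set up between complete Mealy machines and complete DFAs over $\Alphb = \Sigma \times (\Lambda \cup \{\varepsilon\})$. The bridge between the two is the notion of a \emph{trace}: for a complete deterministic Mealy machine $\M_c$, reading an input word $\sigma_1 \cdots \sigma_n$ produces a unique word $(\sigma_1,\gamma_1) \cdots (\sigma_n,\gamma_n) \in \Alphb^*$. I would first record two facts. (i)~A complete deterministic machine emits exactly one output $\gamma_j \in \Lambda \cup \{\varepsilon\}$ at each prefix $\sigma_1 \cdots \sigma_j$, and $\Bo(\M_c)(\sigma_1 \cdots \sigma_j)$ is defined and equal to $\{\gamma_j\}$ precisely when $\gamma_j \neq \varepsilon$; hence the trace of each input word is determined by $\Bo(\M_c)$ and vice versa, so two complete Mealy machines $\M_a, \M_b$ satisfy $\Bo(\M_a) = \Bo(\M_b)$ \emph{iff} they realise the same set of traces. (ii)~Step~\ref{minimize:completeSigma} preserves $\Bo$, since the added absorbing state $\bot$ emits only $\varepsilon$ and rerouting the previously-stuck runs into it changes no emitted symbol. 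So it suffices to prove the proposition with $\M$ replaced by the complete machine $\M_1$ of step~\ref{minimize:completeSigma}.

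The core is the analysis of the minimal DFA $\A'$ of $\A_c := \A_c(\M_1)$ from step~\ref{minimize:minimizeA}, where for a complete deterministic $\M_c$ I write $\A(\M_c)$ for the DFA of step~\ref{minimize:considerA} (all states final) and $\A_c(\M_c)$ for its completion with the non-final sink $\bot'$ of step~\ref{minimize:completeSigmaLambda}. Both $\A(\M_c)$ and $\A_c(\M_c)$ recognise exactly the set of traces of $\M_c$ (every state of $\A(\M_c)$ is accepting, and adding the dead sink does not change the language), so by~(i), $\Bo(\M_a) = \Bo(\M_b)$ iff $\Lang(\A_c(\M_a)) = \Lang(\A_c(\M_b))$. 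The key claim is that any two states merged when minimising $\A_c$ agree on the output function $G$: because $\M_1$ is complete, in $\A(\M_1)$ every state $p$ has a single outgoing $\Alphb$-edge with first component $\sigma$, namely $(\sigma, G(p,\sigma))$ leading to the final state $T(p,\sigma)$, while after step~\ref{minimize:completeSigmaLambda} all other first-component-$\sigma$ edges of $p$ lead to the non-final $\bot'$; hence the one-letter words accepted from $p$ are exactly $\{(\sigma, G(p,\sigma)) : \sigma \in \Sigma\}$, and states with equal languages agree on $G$ everywhere. Moreover $\bot'$ is the \emph{unique} non-final state of $\A_c$, so its class in $\A'$ is the singleton $\{\bot'\}$, and for every other class $B$ and every $\sigma$ there is, in $\A'$, exactly one first-component-$\sigma$ edge out of $B$ whose target is not $\{\bot'\}$, carrying the output $G(p,\sigma)$ common to all $p \in B$. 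Therefore removing $\{\bot'\}$ (step~\ref{minimize:removeBot}) and reading what remains as a Mealy machine (step~\ref{minimize:return}) is unambiguous, and the machine $\M'$ so obtained is deterministic and complete.

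It then remains to identify $\M'$ and to prove minimality and uniqueness. As $\bot'$ is absorbing and non-final in $\A_c$, its class in $\A'$ is absorbing and non-final too, so removing it (step~\ref{minimize:removeBot}) leaves $\Lang(\A')$ unchanged; since every remaining state of $\A'$ is final, that language is exactly the set of traces of $\M'$, which therefore coincides with $\Lang(\A_c)$, i.e.\ with the set of traces of $\M_1$; hence $\Bo(\M') = \Bo(\M_1) = \Bo(\M)$ by~(i). For minimality, let $\M''$ be any complete Mealy machine with $\Bo(\M'') = \Bo(\M)$; by~(i) it realises the traces of $\M_1$, so $\A_c(\M'')$ and $\A_c$ are complete DFAs for the same language and thus have the same minimal DFA $\A'$; counting states, $\A'$ has at most $|\M''| + 1$ of them, so $\M'$ has at most $|\M''|$. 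Uniqueness then follows from the uniqueness of the minimal DFA: every minimal complete Mealy machine with behaviour $\Bo(\M)$ is recovered, up to isomorphism, as the back-translation of $\A'$ with $\{\bot'\}$ removed, and this back-translation involves no choices, so any two such machines are isomorphic to $\M'$. I expect the main obstacle to be the determinism claim of the second paragraph — that minimisation never creates a class with two distinct outputs on one input symbol — which is exactly what the asymmetric completion (final sinks in steps~\ref{minimize:completeSigma}--\ref{minimize:considerA} versus the single non-final sink $\bot'$ in step~\ref{minimize:completeSigmaLambda}) is engineered to guarantee, together with the bookkeeping of how the auxiliary states $\bot$ and $\bot'$ fare under minimisation.
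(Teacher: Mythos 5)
Your proof is correct, and it follows the same overall strategy as the paper — translate the completed Mealy machine to a DFA over $\Sigma \times (\Lambda \cup \{\varepsilon\})$ with all states final plus a non-final sink, minimize, strip the sink, translate back — but it differs in two substantive ways. First, you prove a well-definedness lemma the paper's proof omits entirely: that DFA minimization never merges two states with different outputs on the same input symbol (because the accepted one-letter words from a state are exactly $\{(\sigma, G(p,\sigma))\}$, thanks to the non-final sink $\bot'$ absorbing all other $(\sigma,o')$ edges), so the back-translation in step~\ref{minimize:return} yields a genuine deterministic Mealy machine. This is the point where the asymmetric completion is actually doing its work, and making it explicit strengthens the argument. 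Second, your minimality and uniqueness arguments go through the universal property of the minimal DFA directly: you compare $\A'$ against $\A_c(\M'')$ for an \emph{arbitrary} competing complete machine $\M''$ with the same behaviour and count states. The paper instead argues by contradiction that no two states of $\M'$ share an outputting behaviour — which shows $\M'$ admits no further merging, but leaves implicit the Myhill–Nerode-style step that "not minimal" implies "two states of $\M'$ with equal $\Bo$" (i.e., that any smaller equivalent machine would force such a pair to exist). Your counting argument closes that gap cleanly, at the cost of needing the trace/behaviour correspondence for arbitrary complete machines rather than just for $\M$ (which is exactly your fact~(i), the analogue of the paper's (i)/(ii)). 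Both routes are sound; yours is the more self-contained of the two.
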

\begin{proof}
 We start by defining the auxiliary function 
 $\Out(p,s)$ that gives the output symbol 
 $o \in \Lambda \cup \{\varepsilon\}$ emitted by the Mealy machine after reading the input string $s \in \Sigma^*$ from state $p$.

 In \ref{minimize:completeSigma}, we start by adding input-only transitions to $\M$,
 which preserves the behaviour and, in \ref{minimize:considerA}, we translate it to the DFA $\A$.
 Let us show that 
 (i) for every mapping in $\B(p)$ there corresponds one string in $\Lang(p)$ and (ii) vice-versa.
 To show i), since $\M$ is complete,
 we know that for any $p \in Q$ in $\M$ and any non-empty $s \in \Sigma^*$,
 $\Out(p,s)$ is defined and is equal to some $o \in \Lambda \cup \{\varepsilon\}$.
 Let $s'$ and $\sigma$ be such that $s = s' \cdot \sigma$.
Since $\A$ is a translation of $\M$, 
it then follows that there is some $w \in \Alphb^*$, such that $w|_{\Sigma} = s'$,
 and some $q,r \in Q$ such that $p \transNO{w} q \transNO{(\sigma,o)} r$ in $\A$,
 and $w$ is unique since $\M$ is deterministic.
 Since all states are final in $\A$, $w \cdot (\sigma,o) \in L(p)$, which shows (i).
 Conversely, for any $w \in \Alphb^*$
 and any $\sigma \in \Sigma$ such that $w \cdot (\sigma,o) \in \Lang(p)$, 
 for some $o \in \Lambda \cup \{\varepsilon\}$, 
 there exists one $s = w|_{\Sigma} \cdot \sigma$ such that $\Out(p,s) = o$, which shows (ii).

In \ref{minimize:completeSigmaLambda},
$\A$ is made complete,
a language-preserving step required before applying a DFA minimization algorithm.
In \ref{minimize:minimizeA},
$\A$ is minimized, 
which yields $\A'$.
In \ref{minimize:removeBot}, we remove the macrostate containing $\bot'$ and all its incoming transitions,
which preserves the language of $\A'$ since from $\bot'$ no final state is reachable.
In \ref{minimize:return},
$\A'$ is translated to Mealy machine $\M'$.
Next, we will show that $\M'$ is the minimal complete Mealy machine such that $\Bo(\M') = \Bo(\M)$ by showing that the states in $\M'$ cannot be merged any further.

Let us assume that $\M'$ is not the smallest possible complete Mealy machine for $\Bo(\M)$ and derive an absurd.
By hypothesis,
there are two different states $p$ and $q$ in $\M'$ such that $\Bo(p) = \Bo(q)$.
Since $p$ and $q$ have not been merged together in \ref{minimize:minimizeA},
we have $\Lang(p) \neq \Lang(q)$, and by (i) and (ii) this 
implies $\B(p) \neq \B(q)$.
Then, there is some $w \in \Sigma^*$ and some $\sigma \in \Sigma$ such that 
(iii) $w \cdot \sigma$ is a differentiator between $\B(p)$ and $\B(q)$ but $w$ is not.
Since $\M'$ is complete,
we know there are $p'$, $p''$, $q'$ and $q''$ such that 
$p \transNO{w} p' \trans{\sigma}{o} p''$ and $q \transNO{w} q' \trans{\sigma}{o'} p''$,
for some $o$ and $o'$ in $\Gamma \cup \{\varepsilon\}$.
From (iii), we have $o \neq o'$, which results in an absurd as it contradicts the hypothesis $\Bo(p) = \Bo(q)$. 

Therefore, we conclude that $\M'$ is the smallest possible complete Mealy machine for $\Bo(\M)$. 
The uniqueness comes from the fact that the minimization performed in step \ref{minimize:minimizeA} produces the unique minimal machine for $\Lang(\A)$ (or for $\B(\M)$) and from the fact that $\M'$ is complete.
\end{proof}

The Mealy machine with 10 states for $e_3$ (Figure~\ref{fig:dfst_e3}) can be minimized with the present algorithm to obtain the minimal Mealy machine with 8 states 
(see Figure~\ref{fig:dfst_e3_min} from Section~\ref{sec:introduction}).

\subsection{Achieving Complete Regexp Matching}	\label{sec:completeMatching}

In this section we show how to, 
given a pattern regexp,
build a Mealy machine that finds \textbf{all} matches (including those that overlap)
in any input sequence of arbitrary length.
We say such a machine performs \emph{complete regexp matching}.
Additionally, the given regexp may encode multiple patterns and sub-patterns and,
as we will see, the computed machine is able to signal, upon a match,
which of the patterns has been matched by producing an identifying output.
We will start by giving a formal definition of complete regexp matching and 
then we will present a solution for the example we saw in Section~\ref{sec:introduction},
which we then build up to obtain a general solution.

Let us note that, given a pattern expression and an input string,
detecting all matches of the expression in the string (including overlaps) can be formulated
as being able to find matches of the expression starting from any position in the string.
Therefore, and as captured in Definition~\ref{def:completeMatchingMachine},
an arbitrary input string should signal a match, when processed by a Mealy machine, 
for all its suffixes that correspond to matches of the expression.

\begin{definition}	\label{def:completeMatchingMachine}
Given alphabets $\Sigma$, $\Gamma$ and $\Alphb := \Sigma \times \Parts(\Gamma)$, a regexp $e \in E_\Alphb$ and a Mealy machine $\M$,
$\M$ performs \emph{complete regexp matching of $e$} if and only if 
$\forall_{w \in \Sigma^*} \cdot \B(M)(w) = 
\{ \gamma \in \Gamma \mid \exists_{w',w'' \in \Sigma^*} \cdot w=w'w'' \land \gamma \in \B(e)(w'') \}$.
\end{definition}

Recall the pattern expression $e_3$ from Section~\ref{sec:introduction} and the example matching trace \setuldepth{\large{$\langle\alpha\beta\rangle$}}
$s = \overline{ab\mbox{\ul{$d$}}}\underline{\langle\alpha\rangle bc\underline{abcbc\overline{d\raisebox{2.89mm}{}}}}\overline{\langle\alpha,\beta\rangle cd} \langle\beta\rangle$ given.
The Mealy machine in Figure~\ref{fig:dfst_e3_min} is the minimal machine for $e_3$,
matching exact instances of either $e_1$ or $e_2$,
and for that reason it does not perform complete regexp matching of $e_3$.
A necessary condition needed for complete matching is the ability to read arbitrary input sequences.
Thus, let us start by making the machine complete,
by adding all non-existing transitions to state 0 in an attempt to restart the regexp matching whenever there is a fail.
We denote the resulting machine by $\M$.

We will now give an illustration of how we can perform complete regexp matching of $e_3$ by having at our disposal multiple copies of $\M$ that can run in parallel. 
Consider we have $\M_1$, $\M_2$ and $\M_3$, 
three copies of $\M$ which are all at the initial state (0) before any input is read. 
Table~\ref{table:parallel_machines} summarizes how each machine will behave for each symbol of $s$ read. 
The first symbol is read by a machine at our choice (in this case $\M_1$) while all other machines stay at 0. 
After reading the first three symbols ($abc$), 
$\M_1$ is left at state 7 after producing the output symbol $\alpha$, 
as defined in $e_3$. 
We then read $d$, which makes $\M_1$ return to 0. 
However, $d$ itself is a possible prefix of instances of $e_3$,
and so we capture this by starting $\M_2$,
which is left at state 3 after reading $d$.
Since $\M_2$ would not have read the prefix $ab$ any differently than $\M_1$ did, 
it did not need be started until now. 
Moving on, reading $b$ causes $\M_2$ to move to state 5 while $\M_1$ stays at 0,
and then $ca$ leaves $\M_2$ at 6 and $\M_1$ at 1. 
Then, on reading $bcbc$, $\M_1$ ends up at 2 and $\M_2$ at 4. 
Looking ahead, and since at this point neither $\M_1$ nor $\M_2$ are at state 0, 
none of them will capture the following $d$ as a possible prefix of a new instance. 
Therefore, we make use of $\M_3$. 
As before, 
since $\M_3$ could only have read the symbols so far as either $\M_1$ or $\M_2$ did, we did not need to start it until now. 
And so, on reading $d$, $\M_3$ moves to 3, 
$\M_1$ moves to 7 while emitting $\alpha$, 
and $\M_2$ also moves to 7 but emitting $\beta$. 
At this point
there are two machines at the same state, 7, from where they could only duplicate each other for the remaining of the input,
and so we move one of them (we choose $\M_2$) to 0.
Finally, on reading $cd$, 
$\M_1$ ends at state 3 and $\M_3$ arrives at state 7 while producing $\beta$.

\begin{table}[hbtp]
\footnotesize
\begin{center}
\begin{tabular}{|c| c c c c c c c c c c c c c |}
\hline
       & $a$ & $b$ & $d$       & $b$ & $c$ & $a$ & $b$ & $c$ & $b$ & $c$ & $d$       & $c$ & $d$\\
\hline
$\M_1$ & 1   & 2   & 7$\langle\alpha\rangle$ & 0   & 0   & 1   & 2   & 2   & 2   & 2   & 7$\langle\alpha\rangle$ & 0   & 3 \\
\hline
$\M_2$ & 0   & 0   & 3                       & 5   & 4   & 6   & 5   & 4   & 5   & 4   & 0$\langle\beta\rangle$ &  0  &  0 \\
\hline
$\M_3$ & 0   & 0   & 0                       & 0   & 0   & 0   & 0   & 0   & 0   & 0   & 3      & 4    & 7$\langle\beta\rangle$ \\
\hline
\end{tabular}
\end{center}
\caption{Three copies of the machine $\M$ for $e_3$ working in parallel achieve complete regexp matching.}
\label{table:parallel_machines}
\end{table}

The correctness of the mechanism we illustrated for complete regexp matching of $e_3$ in $s$ can be argued in two main points.
On one hand,
since every output produced resulted from following the transitions in one of the copies of $\M$,
starting from its initial state,
we can know that all matches signalled are true positives.
On the other hand, 
since every symbol read caused all machines (on different states) to move and since, at every moment,
there was at least one machine at state 0 ready to start moving to capture the next match,
we know no match has been missed (i.e., there were no false negatives).
This solution can be generalized to any expression (and any input sequence) by simply making sure that, 
with every symbol read that causes all existing machine copies to not be in the initial state, 
a new copy of the machine is added and is set to its initial state.

We will now abstract the same reasoning presented above to compute a solution that performs complete regexp matching in a single Mealy machine.
As we saw, it is meaningless for the outcome produced by the mechanism to have copies of the machine at the same state,
for which reason, at each step, the number of machines used may be pruned to keep only machines at distinct states.
Moreover, the order of the states the machines are at does not matter 
(e.g., $\M_1$ being at state 1 and $\M_2$ at 0 has the same effect as $\M_2$ being at 1 and $\M_1$ at 0).
Therefore, at each step, the states of all machines together can be represented by a macrostate of $\M$.
The trace constructed in Table~\ref{table:parallel_machines} can then be written as transitions between macrostates:
$\{0\} \transNO{a} \{0,1\} \transNO{b} \{0,2\} \trans{d}{\alpha} \{0,3,7\} \transNO{b} \{0,5\} \transNO{c} \{0,4\} \transNO{a} \{0,1,6\} \transNO{b} \{0,2,5\} \transNO{c} \{0,2,4\} \transNO{b} \{0,2,5\} \transNO{c} \{0,2,4\} \trans{d}{\alpha,\beta} \{0,3,7\} \transNO{c} \{0,4\} \trans{d}{\beta} \{0,3,7\}$.
We can now see our reasoning effectively follows a variation of the subset construction on $\M$
where state 0 is added to every macrostate constructed.
Computing a macrostate from $\{0\}$, for every alphabet symbol, and then repeating the same construction over every macrostate constructed until all macrostates constructed have been processed, as is done by \texttt{subsetT}, 
results in a single machine that performs complete pattern regexp of $e_3$.
Generalizing to any Mealy machine, we obtain an algorithm for complete regexp matching for any pattern regexp.

Finally, 
note that since the initial state is added to every macrostate created, 
the initial machine ($\M$ in our example) need not be complete.
Moreover, since the solution extends the subset construction, 
the machine need not be deterministic either,
and thus the computation can be done in a single step from an FST. 
The algorithm then becomes \texttt{subsetTC},
which differs from algorithm \texttt{subsetT} from Section~\ref{sec:determinization} in that 
$T(S,\sigma)$ is replaced by
$T'(S,\sigma) = 
\varepsilon(\{ q \mid \exists_{p \in S} \cdot p \transNO{\sigma} q \}) \cup \{ i \} $.
The Mealy machine performing complete regexp matching of $e_3$,
computed with \texttt{subsetTC} from the FST of $e_3$ (Appendix~\ref{sec:app:automata_images}, Figure~\ref{app:fig:fst_e3})   and minimized with algorithm \texttt{minComp} (Section~\ref{sec:minimization}),
can be seen in Figure~\ref{fig:rest_Mealy machine_e3_min}.

\begin{figure}[hbtp]
\centering
  \includegraphics[width=.8\linewidth]{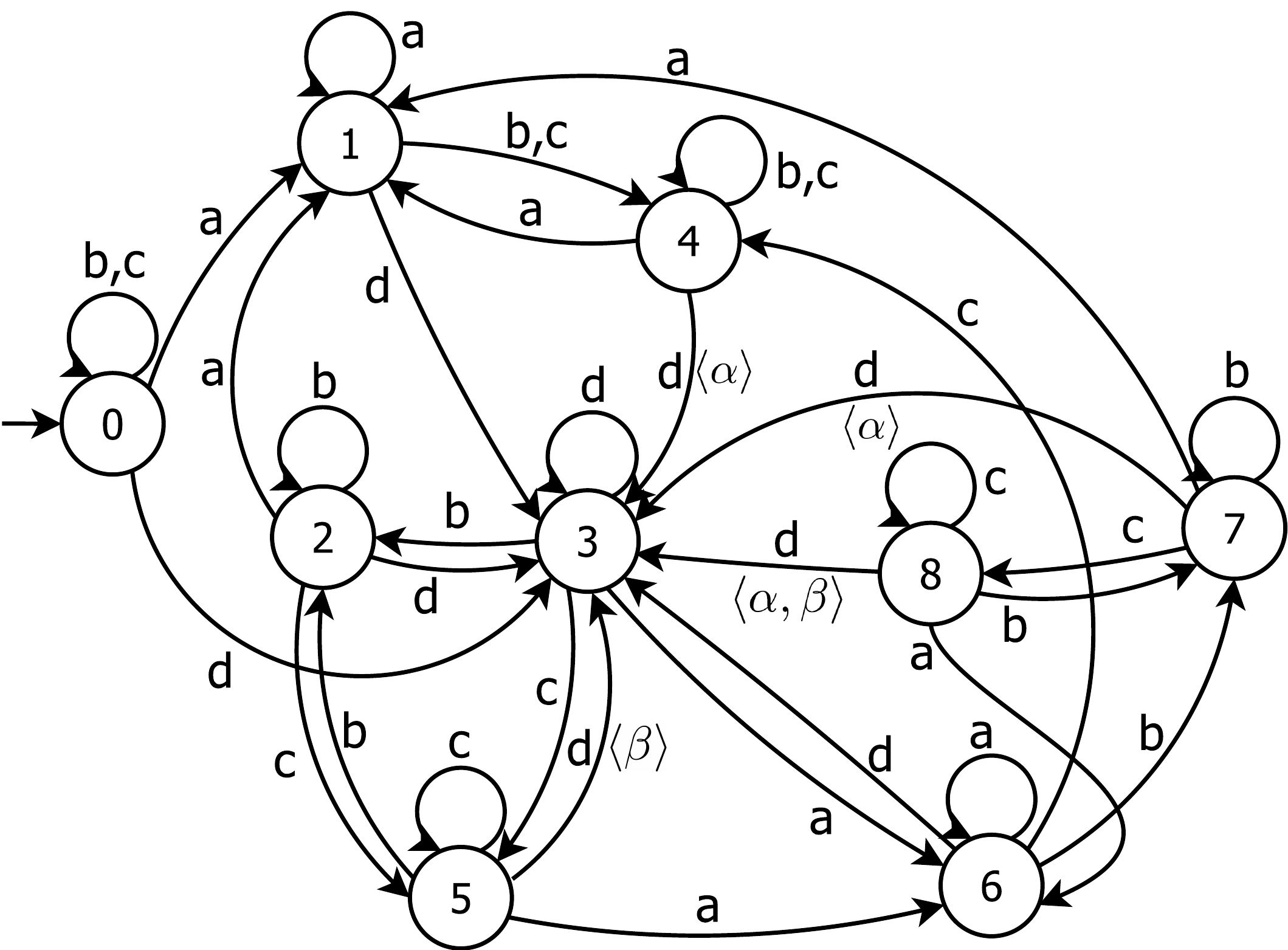}
\caption{The minimal Mealy machine performing complete regexp matching of
  $e_3 = a(b+c)^+d\langle\alpha\rangle + d((a^*b^++b^*)c)^+d \langle\beta\rangle$.}
\label{fig:rest_Mealy machine_e3_min}
\end{figure}

\begin{paragraph}{A note on languages over $\Sigma$.}
In the case of languages over a finite input alphabet 
$\Sigma = \{ \sigma_1, \ldots, \sigma_n \}$,
we can speak of complete matching of a regexp $e \in E_\Sigma$ by a DFA $\A$ by adapting Definition~\ref{def:completeMatchingMachine} to the language of $e$:
$
\forall_{w \in \Sigma^*} \cdot (w \in \Lang(\A) \Leftrightarrow \exists_{w',w'' \in \Sigma^*} \cdot w = w'w'' \land w'' \in \Lang(e))
$, or, in other words,
$\Lang(\A) = \{ w'w'' \in \Sigma^* \mid w'' \in \Lang(e) \} = \Sigma^* \cdot \Lang(e)$.
Therefore, if we denote by $NFA(e)$ an arbitrary NFA accepting $\Lang(e)$,
then computing \texttt{subsetTC}($NFA(e)$) becomes tantamount to computing 
\texttt{subset}($NFA(e')$) where 
$e' = (\sigma_1 + \ldots + \sigma_n)^* \cdot e$.
Indeed, the $NFA(e')$ in Figure~\ref{fig:nfa_input_only_languages} illustrates the idea introduced above of preserving the initial state throughout the subset construction:
since 0 contains a transition to itself by any symbol,
it will be present in every macrostate constructed.

\begin{figure}[hbtp]
\centering
  \includegraphics[width=.3\linewidth]{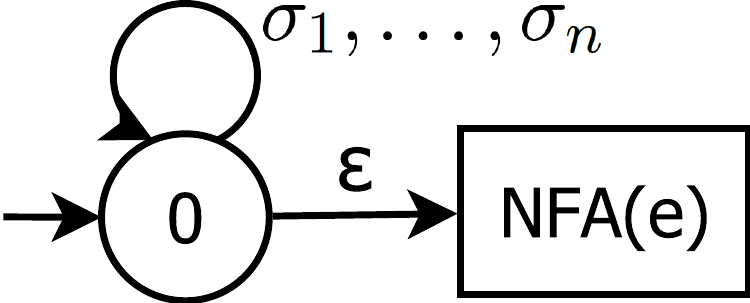}
\caption{For an input-only expression $e$ over $\Sigma = \{\sigma_1, \ldots, \sigma_n\}$,
computing \texttt{subsetTC}($NFA(e)$) is equivalent to computing \texttt{subset} on the NFA shown.}
\label{fig:nfa_input_only_languages}
\end{figure}

\end{paragraph}

\newpage

\appendix
\section{Appendix A:}	\label{sec:app:stateMerges}

In this appendix we provide a counter-example to the state-merging techniques presented in~\cite{Solovev2011} for minimizing a Mealy machine.   
The authors present a set of conditions as necessary and sufficient for merging two states in a Mealy machine while preserving its behaviour.
As we will show,
when applied to the machine in Figure~\ref{fig:minimization_counterexample_A}, 
the techniques fail to merge any states (thus making the rules not necessary).

The authors consider the more general case of Mealy machines where transitions may be initiated by multiple input symbols that must be read at the same time,
and by possibly more than one combination of symbols. 
For a transition from state $a_i$ to state $a_j$, 
the authors encode the transition rule as a ternary vector $X(a_i,a_j)$ of input variables initiating the given transition. 
Each position in this vector corresponds to a symbol in the alphabet and it will contain '$1$' if the transition can only be taken when the corresponding symbol is read, 
'$0$' if it can only be taken when the symbol is not read or '$-$' if the transition can be taken whether that symbol is read or not. 
Thus, the Mealy machines we consider in the our report are the particular case where every transition rule contains $1$ in the position of the one symbol initiating that transition while all other positions contain $0$. 
For example, 
in the Mealy machine with just one symbol shown in Figure~\ref{fig:minimization_counterexample_A}, $X(p,q) = X(q,r) = X(r,p) = 1$.
Moreover, two transitions such that, in some position, 
one contains $0$ and the other $1$ are said to be orthogonal. 
For a given transition rule $X(a_i,a_j)$, 
the set of output symbols emitted is denoted by $Y(a_i,a_j)$.

The authors represent the set of states corresponding to transitions from $a_i$ as $A(a_i)$, 
and divide the cases in which $a_i$ and $a_j$ can be merged into three possible situations:
$i) \; A(a_i) \cap A(a_j) = \{\}$, $ii) \; A(a_i) = A(a_j)$, 
and $iii) \; A(a_i) \neq A(a_j)$ and $A(a_i) \cap A(a_j) \neq \{\}$. 

For $i)$, 
the condition presented as necessary and sufficient to merge $a_i$ and $a_j$ is that for every $X(a_i,a_h)$ and every $X(a_j,a_t)$, 
where $a_h$ and $a_t$ are arbitrary states in the machine, 
$X(a_i,a_h)$ and $X(a_j,a_t)$ must be orthogonal. 
In the machine in Figure~\ref{fig:minimization_counterexample_A} we can see that,
while condition $i)$ is satisfied by every two states, 
all three transition rules are the same and thus not orthogonal,
and so no merging would occur.
However, 
it is easy to see there is an equivalent minimal machine with just one state and a transition to itself,
which is obtained by merging the three states into one since they have the same behaviour.
Therefore, the orthogonality condition given for situation $i)$ is not a necessary condition.

\begin{figure}[hbtp]
	  \centering
	  \includegraphics[width=.3\linewidth]{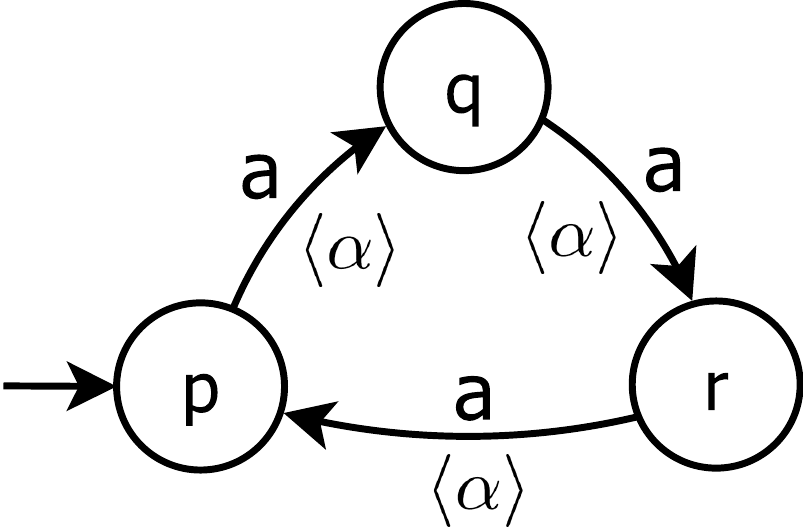}
	  \caption{All states are indistinguishable and therefore should be merged.}
	  \label{fig:minimization_counterexample_A}
\end{figure}

\newpage

\section{Appendix B:}	\label{sec:app:automata_images}

\begin{figure}[hbtp]
 \begin{center}
  \includegraphics[width=12cm]{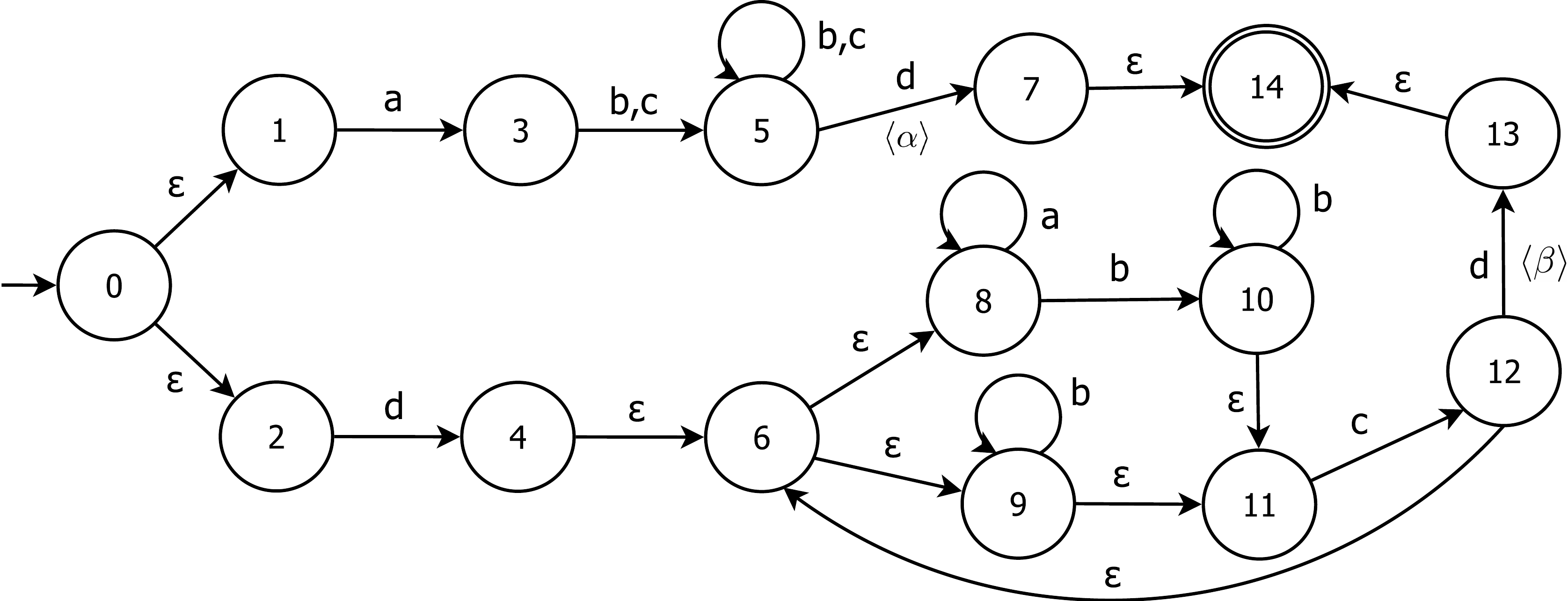}
 \end{center}
  \caption{An FST for 
  $e_3 = e_1 + e_2 = a(b+c)^+d\langle\alpha\rangle + d((a^*b^++b^*)c)^+d \langle\beta\rangle$.}
  \label{app:fig:fst_e3}
\end{figure}

\newpage

\bibliography{references}
\bibliographystyle{plain}

\end{document}